\newtheorem{proposition}{Proposition}
\newtheorem*{Simulation1*}{Simulation 1}
\newtheorem*{Simulation2*}{Simulation 2}
\newtheorem*{Simulation22*}{Simulation 2}
\newtheorem*{PermutationTestA*}{Permutation Test A}
\newtheorem*{PermutationTestB*}{Permutation Test B}
\newtheorem*{PermutationTestC*}{Permutation Test C}
\begin{document}

\title{Non-parametric regression for networks}

\author{Katie E.~Severn, Ian L.~Dryden and Simon P.~Preston\\
University of Nottingham, University Park, Nottingham, NG7 2RD, UK}

\date{}


\maketitle

\begin{abstract}
 Network data are becoming increasingly available, and so there is a need to develop suitable 
 methodology for statistical analysis. Networks can be represented as graph Laplacian matrices, which are
 a type of manifold-valued data. Our main objective is to estimate a regression curve from a sample of graph Laplacian matrices conditional on a set of Euclidean covariates, for example in dynamic networks 
 where the covariate is time. We develop an adapted 
 Nadaraya-Watson estimator which has
 uniform weak consistency for estimation using Euclidean and power Euclidean metrics. 
 We apply the methodology to the Enron email corpus to model smooth trends in monthly networks and highlight anomalous networks. Another motivating application is
 given in corpus linguistics, which explores trends in an author’s writing style over time based on word co-occurrence networks.

\end{abstract}

\section{Introduction} \label{section:IntroApplication}
Networks are of wide interest and able to represent many different phenomena,
for example social interactions and connections between regions in the brain \citep{kolaczyk2009statistical,ginestet2017hypothesis}. 
The study of dynamic networks has recently increased as more data of this type is becoming available \citep{rastelli_latouche_friel_2018}, where networks evolve over time. 
In this paper we develop some non-parametric regression methods for modelling and predicting networks where covariates are available. An
application for this work
is the study of dynamic networks derived from the Enron email corpus, in which each network corresponds to communications between employees in a particular month \citep{Diesner2005}. 
Another motivating application is the study of evolving writing styles in
the novels of Jane Austen and Charles Dickens, in which each network is a representation of a novel 
based on word co-occurences, and the covariate is the time that writing of the novel began
\citep{Severnetal19}. 
In both applications, the goal is to model smooth trends in the structure of the dynamic networks as they evolve over time. 

An example of previous work on dynamic network data is \citet{friel2016interlocking} who embedded nodes of bipartite dynamic networks in a latent space, motivated by networks representing the connection of leading Irish companies and board directors. 
We will also use embeddings in our work, although the bipartite constraints are not present. Further approaches include object functional principal components analysis \citep{Dubeymueller19} applied to time-varying networks from New York taxi trip data; multi-scale time-series modelling \citep{Kangetal17} applied to
magnetoencephalography data in neuroscience;  
and quotient space space methodology applied to brain arterial networks \citep{Guosriv20}.

The analysis of networks is a type of object oriented data analysis \citep{Marralon14}, and important considerations are to decide what are the data objects and how are they represented.  
We consider datasets where each observation is
a weighted network, denoted $G_m=(V,E)$, comprising a set of nodes, $V=\lbrace v_1, v_2,\ldots, v_m\rbrace$, and
a set of edge weights, $E=\lbrace w_{ij} : w_{ij}\geq 0, 1\leq i,j \leq m\rbrace$, indicating nodes $v_i$ and
$v_j$ are either connected by an edge of weight $w_{ij}>0$, or else unconnected (if $w_{ij}=0$). An unweighted
network is the special case with $w_{ij}\in\lbrace 0,1\rbrace$.  We restrict attention to networks that are
undirected and without loops, so that $w_{ij}=w_{ji}$ and $w_{ii}=0$, then any such network can be identified with
its graph Laplacian matrix $\mathbf{L}=(l_{ij})$, defined as
\begin{align*}
   l_{ij} = 
\begin{cases}
    -w_{ij}, & \text{if } i\neq j\\
    \sum_{k\neq i}w_{ik},& \text{if } i=j
\end{cases}
\end{align*}  
for $1\leq i,j \leq m$. 
The graph Laplacian matrix can be written as $\mathbf{L}=\mathbf{D}-\mathbf{A}$, in terms of the
adjacency matrix, $\mathbf{A}=(w_{ij})$, and degree matrix 
$\mathbf{D}=\text{diag}(\sum_{j=1}^mw_{1j},\ldots,\sum_{j=1}^mw_{mj})=\text{diag}(\mathbf{A}\mathbf{1}_m)$, where
$\mathbf{1}_m$ is the $m$-vector of ones.  The $i$th diagonal element of $\mathbf{D}$ equals the degree of node
$i$. The space of $m \times m$ graph Laplacian matrices is of dimension $m(m-1)/2$ and is
\begin{align}\label{eq:lapl space}
\mathcal{L}_m=\lbrace \mathbf{L}=(l_{ij}):\mathbf{L}=\mathbf{L}^T ;\, l_{ij}\leq 0 \, \forall i\neq j ;\, \mathbf{L} {\mathbf{1}_m}={\mathbf{0}_m} \rbrace,
\end{align}
where $\mathbf{0}_m$ is the $m$-vector of zeroes. 
In fact the space $\mathcal{L}_m$ is a closed convex subset of the cone of centred 
symmetric 
positive semi-definite $m\times m$ matrices
and $\mathcal{L}_m$ is a manifold with corners \citep{ginestet2017hypothesis}. 

Since the sample space $\mathcal{L}_m$ for graph Laplacian data is non-Euclidean, standard approaches to non-parametric regression cannot be applied directly. In this paper, we use the statistical framework introduced in \citet{Severnetal19} for extrinsic analysis of graph Laplacian data, in which ``extrinsic'' refers to the strategy of mapping data into a Euclidean space, where analysis is performed, before mapping back to $\mathcal{L}_m$. The choice of mapping enables freedom in the choice of metric used for the statistical analysis, and in various applications with manifold-valued data analysis there is evidence of advantage in using non-Euclidean metrics \citep{Drydenetal09,Pigolietal14}.

A summary of the key steps in the extrinsic approach is: 
\begin{enumerate}[i)]
\item embedding to another manifold ${\mathcal M}_m$ by raising the graph Laplacian matrix to a power,
\item mapping from $\mathcal{M}_m$ to a (Euclidean) tangent space $T_\nu(\mathcal{M}_m)$ in which to carry out statistical analysis, 
\item inverse mapping from the tangent space $T_\nu(\mathcal{M}_m)$ to the embedding manifold $\mathcal{M}_m$,
\item reversing the powering in i), and projecting back to graph Laplacian space $\mathcal{L}_m$, 
\end{enumerate} 
which we explain in more detail as follows. 
First write $\mathbf{L}=\mathbf{U}\boldsymbol{\Xi}\mathbf{U}^{T}$ by the
spectral decomposition theorem, with $\boldsymbol{\Xi}=\text{diag}(\xi_1,\ldots,\xi_m)$ and
$\mathbf{U}=(\mathbf{u}_1,\ldots,\mathbf{u}_m)$, where $\lbrace\xi_i\rbrace_{i=1,\ldots,m}$  are the eigenvalues, which are non-negative as any $\mathbf{L}$ is positive semi-definite, and 
$\lbrace\mathbf{u}_i\rbrace_{i=1,\ldots,m}$ are the corresponding 
eigenvectors of $\mathbf{L}$. We consider the following map which raises the graph Laplacian to the power $\alpha>0$:
\begin{align}
\begin{split}
\text{F}_\alpha(\mathbf{L})&= \mathbf{L}^\alpha=\mathbf{U}\boldsymbol{\Xi}^\alpha\mathbf{U}^T : \mathcal{L}_m\rightarrow \operatorname{Image}(\mathcal{L}_m) \subset \mathcal{M}_m.\\
\end{split}  \label{Fmap}
\end{align}
In this paper we take ${\cal M}_m$ to be the Euclidean space of symmetric $m \times m$ matrices. In terms of $\text{F}_\alpha(\cdot)$ we define the power Euclidean distance between two graph Laplacians as
\begin{equation}
d_\alpha( \mathbf{L}_1 , \mathbf{L}_2 ) =   \| \text{F}_\alpha(\mathbf{L}_1) - \text{F}_\alpha(\mathbf{L}_2) \|,
\label{eqn:power:euclidean:distance}
\end{equation}
where $\| \mathbf{X} \| = \sqrt{ {\rm trace}(\mathbf{X}^T \mathbf{X})  }$  is the Frobenius norm, also known as the Euclidean norm. For the special case $\alpha=1$, (\ref{eqn:power:euclidean:distance}) is just the Euclidean distance. \citet{Severnetal19} further considered a Procrustes distance 
which includes minimization over an orthogonal matrix,  
approximately allowing relabelling of the nodes, and in this case the 
embedding manifold $\mathcal{M}_m$ is a Riemannian manifold known as the 
size-and-shape space \citep[p.99]{Drydmard16}. However, 
in this paper for simplicity and because the labelling is known, we shall just consider the power Euclidean metric. 
 
After the embedding the data are mapped to a tangent space $T_\nu(\mathcal{M}_{m})$ of the embedding manifold at $\nu$ using the bijective transformation 
$$\pi_\nu^{-1} :  \mathcal{M}_{m} \to T_\nu(\mathcal{M}_{m}) . $$
In our applications we take $\nu=0$ and the tangent co-ordinates are
$\mathbf{v} = {\rm vech}(\mathbf{H} \text{F}_\alpha(\mathbf{L}) \mathbf{H}^T)$, 
where ${\rm vech}()$ is the vector of elements above and including the 
diagonal and $\mathbf{H}$ is the Helmert sub-matrix 
\citep[p49]{Drydmard16}. The main point of note is that the tangent space is a Euclidean space of dimension $m(m-1)/2$.  Hence, multivariate linear statistical analysis can be carried out in $T_\nu(\mathcal{M}_{m})$; for example,  \citet{Severnetal19} considered estimation, 
two-sample hypothesis tests, and linear regression. 

After carrying out statistical analysis, the fitted values are then transformed back to the graph Laplacian 
space as follows:
\begin{equation}
P_{\cal L} \circ \text{G}_\alpha \circ \pi_\nu :  T_\nu(\mathcal{M}_{m}) \to \mathcal{L}_m  \label{uniquemap}
\end{equation}
where $\text{G}_\alpha:  \mathcal{M}_{m}  \to  \mathcal{M}_{m}$ is a map that reverses the power, and $P_{\cal L} :  \mathcal{M}_{m}  \to  \mathcal{L}_{m} $ is the projection to the closest point in graph Laplacian space using Euclidean 
distance. The projection $P_{\cal L}$ is obtained by solving a convex optimization problem using quadratic programming, and the solution is therefore unique. 
See \citet{Severnetal19} for full details of this framework. 

For visualising results, it is useful to map the data and fitted regression lines in $\mathcal{L}_m$
into $\mathbb{R}^2$. In this paper we do so using principal component analysis (PCA) such that the two plotted dimensions reflect the two orthogonal dimensions of greatest sample variability in the tangent space \citep{Severnetal19}. 




\section{Nadaraya-Watson estimator for network data}
\subsection{Nadaraya-Watson estimator}
We first review the classical 
Nadaraya-Watson estimator  \citep{10.2307/25049340,doi:10.1137/1110024} before defining an analogous version for data on $\mathcal{L}_m$. Consider the regression problem where 
we want to predict an unknown variable $y(\mathbf{x}) \in \mathbb{R}$ with known covariate $\mathbf{x} \in \mathbb{R}^p$ for the dataset 
of independent and identically distributed random variables  
$(\lbrace Y_1, \mathbf{X}_1\rbrace,\dots, \lbrace Y_n,\mathbf{X}_n\rbrace)$ observed at   $(\lbrace y_1, \mathbf{x}_1\rbrace,\dots, \lbrace y_n,\mathbf{x}_n\rbrace)$
with $E\{ |Y|\} < \infty$. 
The aim is to estimate the regression function 
$$ m(\mathbf{x}) = E [ y(\mathbf{x}) | \mathbf{x} ] . $$
The Nadaraya-Watson estimator is 
\begin{align}\label{eq: NW estimator}
\hat m(\mathbf{x})=\frac{\sum_{i=1}^nK_h(\mathbf{x}-\mathbf{x}_i)y_i}{\sum_{i=1}^nK_h(\mathbf{x}-\mathbf{x}_i)},
\end{align}
where $K_h \ge 0$ is a kernel function with bandwidth $h>0$. 

Consider now a version of the regression problem with $y(\mathbf{x})$ replaced with an $m \times m$  graph Laplacian matrix $\mathbf{L}(\mathbf{x}) \in \mathcal{L}_m$ 
with known covariates $\mathbf{x} \in \mathbb{R}^p$ and dataset $(\lbrace \mathbf{L}_1, \mathbf{x}_1\rbrace,\dots, \lbrace \mathbf{L}_n,\mathbf{x}_n\rbrace)$ with 
$E\{ |(\mathbf{L})_{ij}|\} < \infty, i=1,\ldots,m; j=1,\ldots,m$. 
We wish to 
estimate the regression function 
$$ \boldsymbol{\Lambda}(x) = E [ \mathbf{L}(\mathbf{x}) | \mathbf{x} ] . $$
A natural analogue of $\hat{m}(\mathbf{x})$ in \eqref{eq: NW estimator} for graph Laplacian data given covariate, $\mathbf{x} \in \mathbb{R}^p$, is 
\begin{align}\label{eq:euclidean nadaraya}
\hat{\mathbf{L}}_{NW}(\mathbf{x})=\frac{\sum_{i=1}^nK_h(\mathbf{x}-\mathbf{x}_{i})\mathbf{L}_i}{\sum_{i=1}^nK_h(\mathbf{x}-\mathbf{x}_{i})} 
= \sum_{i=1}^n W_{hi}(\mathbf{x})  \mathbf{L}_i , 
\end{align}
where 
$$W_{hi}(\mathbf{x}) = \frac{ K_h(\mathbf{x}-\mathbf{x}_{i}) }  {\sum_{i=1}^nK_h(\mathbf{x}-\mathbf{x}_{i})} 
\ge 0$$ 
and note that $\sum_{i=1}^n W_{hi}(\mathbf{x}) = 1$. 
A common choice of kernel function is the Gaussian kernel 
 \begin{align}\label{eq:kernel function}
 K_h(\mathbf{u})= \frac{1}{h\sqrt{2\pi}}\exp\left(-\frac{\Vert \mathbf{u}\Vert^2 }{2h^2}\right),
 \end{align}
 which is bounded above and strictly positive for all $\mathbf{u}$.  We use a truncated 
 version of \eqref{eq:kernel function} 
such that $K_h(\mathbf{u}) = 0$ for $\Vert \mathbf{u} \Vert > c $ (with $c$ large) in order that this truncated kernel has compact support, as required by theoretical results presented later. 
Wherever $\hat{\mathbf{L}}_{NW}$ is defined (meaning that at least one of the 
$K_h(\mathbf{x}-\mathbf{x}_{i})$ is non-zero) it is a sum of positively weighted graph Laplacians. Since 
the space $\mathcal{L}_m$ is a convex cone \citep{ginestet2017hypothesis}, itself defined as the sum of positively weighted graph Laplacians, thus  $\hat{\mathbf{L}}_{NW}(\mathbf{t})\in \mathcal{L}_m$ as required.

The estimator in (\ref{eq:euclidean nadaraya}) can equivalently be written 
as the graph Laplacian that minimises a weighted sum of squared Euclidean, $d_1$, distances to the sample data
\begin{align}\label{eq:manifold nadaraya euc}
\hat{\mathbf{L}}_{NW}(\mathbf{x})
=\arg\inf_{{\mathbf{L}}\in \mathcal{L}_m}\sum_{i=1}^n  W_{hi}(\mathbf{x}) d_1(\mathbf{L}_i, \boldsymbol{\mathbf{L}})^2 = 
\arg\inf_{{\mathbf{L}}\in \mathcal{L}_m}\sum_{i=1}^n  W_{hi}(\mathbf{x})
\| \mathbf{L}_i - \boldsymbol{\mathbf{L}} \|^2 ,
\end{align}
using weighted least squares. In principle, the Euclidean distance $d_1$ in 
$\eqref{eq:manifold nadaraya euc}$ can be replaced with a different distance metric, $d$, though solving for the estimator entails an 
optimisation on the manifold $\mathcal{L}_m$, which can be 
theoretically and computationally challenging. Hence instead we generalise to other distances via an extrinsic approach, and define
\begin{align}\label{eq:manifold nadaraya}
\hat{\mathbf{L}}_{NW,d}(\mathbf{x})
=\text{P}_\mathcal{L}
\left(\arg\inf_{\boldsymbol{\mathbf{L}}\in \mathcal{M}_m}\sum_{i=1}^n W_{hi}(\mathbf{x})  d(\mathbf{L}_i, \boldsymbol{\mathbf{L}})^2\right),
\end{align}
which is simpler provided, as here, the embedding manifold $\mathcal{M}_m$ is chosen such that the optimisation is straightforward. 
The projection is needed to map back to graph Laplacian space ${\cal L}_m$. 
For the power Euclidean metric, $d_\alpha$, consider the Nadaraya-Watson estimator in the tangent space $T_\nu({\cal M}_m)$,
\begin{equation}
\hat{\mathbf{L}}_{ {\cal M}_m ,\alpha} (\mathbf{x})
= \sum_{i=1}^n W_{hi}(\mathbf{x})  \pi_\nu^{-1}( \text{F}_\alpha ( \mathbf{L}_i) )  , 
\end{equation}
in terms of which, after mapping back to ${\mathcal L}_m$ using (\ref{uniquemap}), 
the resulting Nadaraya-Watson estimator in the graph Laplacian space is
\begin{align}
\hat{\mathbf{L}}_{NW,\alpha}(\mathbf{x})
=   P_{\cal L} \circ \text{G}_\alpha \circ \pi_\nu  \left(  \hat{\mathbf{L}}_{{\cal M}_m,\alpha}(\mathbf{x})  \right).   \label{powerNW}
\end{align}
When $\alpha=1$ this simplifies to (\ref{eq:euclidean nadaraya}).

\subsection{Uniform weak consistency}

First we show that the Nadaraya-Watson estimator for graph Laplacians (\ref{eq:euclidean nadaraya}) is uniformly weakly consistent. 

Let 
\begin{equation}
J_{n} = E \left\{ \int \left\| \hat {\mathbf{L}}_{NW}(\mathbf{x}) - \boldsymbol\Lambda(\mathbf{x}) \right\|^2 \mu(\mathrm{d}\mathbf{x}) \right\} 
\end{equation}
where $\mu$ is the probability measure of $\mathbf{x}$.  

\begin{proposition} Suppose the kernel function $K_h$ is 
  non-negative on $\mathbb{R}^p$, bounded above, has 
  compact support and is strictly positive in a neighbourhood of the origin. 
  If $E\{ \|
  \mathbf{L} \|^2 \} < \infty$, as $h \to 0$ and $n h^d \to \infty$ it follows
  that $J_{n} \to 0$. Hence the Nadaraya-Watson estimator  $\hat
  {\mathbf{L}}_{NW}(\mathbf{x})$ is uniformly weakly consistent for the true
  regression function $\boldsymbol\Lambda(\mathbf{x})$.  \end{proposition}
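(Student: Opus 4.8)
The plan is to reduce the matrix-valued consistency statement to $m^2$ scalar kernel-regression problems and then to invoke the classical $L_2$ (weak) universal consistency theorem for the Nadaraya--Watson estimator entrywise. The point is that nothing genuinely new happens at the level of $\mathcal{L}_m$: the estimator \eqref{eq:euclidean nadaraya} is linear in the data and the Frobenius norm decouples across matrix entries, so the whole problem factorises.

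Concretely, write $\hat{L}_{NW,jk}(\mathbf{x})$ and $\Lambda_{jk}(\mathbf{x})$ for the $(j,k)$ entries of $\hat{\mathbf{L}}_{NW}(\mathbf{x})$ and $\boldsymbol{\Lambda}(\mathbf{x})$. Linearity of \eqref{eq:euclidean nadaraya} gives
\begin{equation*}
\hat{L}_{NW,jk}(\mathbf{x}) = \sum_{i=1}^n W_{hi}(\mathbf{x})\, (\mathbf{L}_i)_{jk},
\end{equation*}
which is exactly the scalar Nadaraya--Watson estimator built from responses $(\mathbf{L}_i)_{jk}$ and covariates $\mathbf{x}_i$, targeting $\Lambda_{jk}(\mathbf{x}) = E[(\mathbf{L})_{jk}\mid\mathbf{x}]$. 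Since $\|\mathbf{X}\|^2 = \sum_{j,k}(\mathbf{X})_{jk}^2$, expanding the squared norm and interchanging the finite sum with the expectation and integral gives
\begin{equation*}
J_n = \sum_{j=1}^m\sum_{k=1}^m E\left\{\int \big(\hat{L}_{NW,jk}(\mathbf{x}) - \Lambda_{jk}(\mathbf{x})\big)^2\,\mu(\mathrm{d}\mathbf{x})\right\},
\end{equation*}
a finite sum of the integrated $L_2$ risks of the component scalar estimators.

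I would then check the hypotheses of the classical weak universal consistency theorem for kernel regression estimators (Stone's theorem; see the monograph of Gy\"orfi, Kohler, Krzy\.zak and Walk for the integrated $L_2$ version) for each $(j,k)$. The assumed $E\{\|\mathbf{L}\|^2\}<\infty$ forces $E\{(\mathbf{L})_{jk}^2\}<\infty$ for every entry, supplying the required finite second moment of the response. The stated kernel conditions --- nonnegative, bounded above, compactly supported, and strictly positive on a neighbourhood of the origin --- are precisely the ``boxed kernel'' conditions under which that theorem yields integrated $L_2$ consistency as soon as $h\to 0$ and $nh^d\to\infty$. Applying it entrywise shows every summand tends to $0$, and as there are only $m^2$ of them, $J_n\to 0$; this is the asserted (uniform) weak consistency, while the convex-cone property already noted guarantees $\hat{\mathbf{L}}_{NW}(\mathbf{x})\in\mathcal{L}_m$ throughout.

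The main obstacle is not any single calculation but matching the framework to the cited theorem cleanly. Two things need care: first, that the shared denominator $\sum_i K_h(\mathbf{x}-\mathbf{x}_i)$ appearing in every weight $W_{hi}$ is handled --- but this is internal to the scalar theorem, which is stated for exactly this ratio form, so no separate treatment is needed; and second, that the constraints defining $\mathcal{L}_m$ (symmetry, nonpositive off-diagonals, zero row sums) do not interfere --- they do not, since each entry is simply treated as a real-valued response and the universal theorem is distribution-free in $(\mathbf{x},(\mathbf{L})_{jk})$. Thus the proof is essentially the reduction above together with one invocation of the standard theorem.
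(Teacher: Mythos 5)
Your argument is essentially identical to the paper's proof: the authors likewise reduce to the $(i,j)$th scalar regression problem, invoke the classical integrated $L_2$ weak consistency result (they cite Devroye (1980) rather than Gy\"orfi et al., but it is the same family of theorems) entrywise, and sum the $m^2$ component risks to conclude $J_n \to 0$. Your additional remarks on the shared denominator and the irrelevance of the $\mathcal{L}_m$ constraints are correct but not needed beyond what the paper states.
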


\begin{proof} {\rm 
Consider the univariate regression problem for the $(i,j)$th element of $\hat{\mathbf{L}}_{NW}(\mathbf{x})$. From \citet{devroye1980} we know
that under the conditions of the proposition we have 
$$J_{n,ij} = E \left\{ 
\int \left| \left( \hat{\mathbf{L}}_{NW}(\mathbf{x})\right)_{ij} - \left(\boldsymbol\Lambda(\mathbf{x})\right)_{ij} \right|^2 \mu(\mathrm{d}\mathbf{x}) \right\} \to 0 \; , \;  i=1,\ldots,m; j=1,\ldots,m,$$
as $h \to 0$ and $n h^d \to \infty$; and since   
$$ J_{n} = \sum_{i=1}^m\sum_{j=1}^m J_{n,ij},$$
thus $J_{n} \rightarrow 0$. 
} \end{proof}

The result can be extended to the power Euclidean distance based Nadaraya-Watson estimator (\ref{powerNW}). 
Let 
\begin{equation}
J_{n,\alpha} = E \left\{ \int \left\|  \hat{\mathbf{L}}_{NW,\alpha}(\mathbf{x}) - \boldsymbol\Lambda(\mathbf{x}) \right\| ^2 \mu(\mathrm{d}\mathbf{x}) \right\}.
\end{equation}
where $\mu$ is the probability measure of $X$.  

\begin{proposition}
Under the conditions of Proposition 1 it follows that 
$J_{n,\alpha} \to 0$. Hence the power Euclidean Nadaraya-Watson estimator $\hat{\mathbf{L}}_\alpha(\mathbf{x})$ is uniformly weakly
 consistent for the true regression 
function $\boldsymbol\Lambda(\mathbf{x})$. 
\end{proposition}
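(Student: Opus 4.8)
The plan is to reduce the claim to the Euclidean situation already settled in Proposition~1, exploiting that all the statistical averaging takes place in the tangent space $T_\nu(\mathcal{M}_m)$, which is Euclidean of dimension $m(m-1)/2$. Write $\Phi = P_{\cal L}\circ \text{G}_\alpha \circ \pi_\nu$ for the back-map in \eqref{uniquemap} and $\mathbf{v}_i = \pi_\nu^{-1}(\text{F}_\alpha(\mathbf{L}_i))$ for the tangent coordinates of the transformed data, so that $\hat{\mathbf{L}}_{NW,\alpha}(\mathbf{x}) = \Phi\big(\hat{\mathbf{L}}_{{\cal M}_m,\alpha}(\mathbf{x})\big)$ with $\hat{\mathbf{L}}_{{\cal M}_m,\alpha}(\mathbf{x}) = \sum_i W_{hi}(\mathbf{x})\mathbf{v}_i$ being exactly a vector-valued Nadaraya--Watson estimator of $\mathbf{v}^*(\mathbf{x}) = E[\mathbf{v}\mid\mathbf{x}]$. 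First I would apply the component-wise \citet{devroye1980} argument of Proposition~1 to each of the $m(m-1)/2$ tangent coordinates to obtain
\begin{equation*}
E\left\{\int \left\| \hat{\mathbf{L}}_{{\cal M}_m,\alpha}(\mathbf{x}) - \mathbf{v}^*(\mathbf{x}) \right\|^2 \mu(\mathrm{d}\mathbf{x})\right\} \to 0
\end{equation*}
as $h\to 0$ and $nh^d\to\infty$. This step needs $E\{\|\text{F}_\alpha(\mathbf{L})\|^2\}<\infty$; for $\alpha\le 1$ this is implied by the assumed $E\{\|\mathbf{L}\|^2\}<\infty$, since every eigenvalue $\xi\ge 0$ satisfies $\xi^{2\alpha}\le 1+\xi^2$, whereas for $\alpha>1$ the corresponding higher moment must be assumed explicitly.

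Next I would push this convergence through $\Phi$. Two of its three factors are harmless: $\pi_\nu$ is a fixed linear bijection, hence globally Lipschitz, and $P_{\cal L}$, being the metric projection onto the closed convex set $\mathcal{L}_m$, is non-expansive, so that $\|\Phi(\mathbf{a})-\Phi(\mathbf{b})\| \le \|\text{G}_\alpha(\pi_\nu\mathbf{a}) - \text{G}_\alpha(\pi_\nu\mathbf{b})\|$. Moreover $\pi_\nu\hat{\mathbf{L}}_{{\cal M}_m,\alpha}(\mathbf{x}) = \sum_i W_{hi}(\mathbf{x})\mathbf{L}_i^\alpha$ and $\pi_\nu\mathbf{v}^*(\mathbf{x}) = E[\mathbf{L}^\alpha\mid\mathbf{x}]$ are respectively a convex combination and an expectation of positive semi-definite matrices, so both lie in the positive semi-definite cone, which is the natural domain of the power-reversing map $\text{G}_\alpha$.

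The crux, and the step I expect to be the main obstacle, is that $\text{G}_\alpha$ (the $1/\alpha$-power) is not globally Lipschitz for $\alpha\ne 1$: the scalar map $x\mapsto x^{1/\alpha}$ has unbounded slope at the origin when $\alpha>1$ and at infinity when $\alpha<1$, so integrated mean-squared convergence does not transfer through it automatically. I would handle this by truncation. On the event that both arguments have Frobenius norm at most $R$, $\text{G}_\alpha$ is uniformly continuous (Hölder of order $\min(1,1/\alpha)$, and Lipschitz for $\alpha\le1$), and its modulus of continuity converts the $L^2$ convergence there; the complementary large-norm event is controlled by the moment bound through a uniform-integrability argument, letting $R\to\infty$ after $n\to\infty$. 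Combining the two contributions yields $J_{n,\alpha}\to 0$.

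Finally, the limit must be identified with care. The argument above gives consistency for $\Phi(\mathbf{v}^*(\mathbf{x})) = P_{\cal L}\big((E[\mathbf{L}^\alpha\mid\mathbf{x}])^{1/\alpha}\big)$. Because the power map and the expectation do not commute, this coincides with $\boldsymbol\Lambda(\mathbf{x}) = E[\mathbf{L}\mid\mathbf{x}]$ only when $\alpha=1$; for general $\alpha$ the back-transformed conditional mean need not lie in $\mathcal{L}_m$, so $P_{\cal L}$ genuinely acts. The rigorous conclusion is thus consistency for the power-Euclidean regression function defined by this back-transform, and I would either adopt that quantity as the definition of $\boldsymbol\Lambda(\mathbf{x})$ or restrict the unqualified $E[\mathbf{L}\mid\mathbf{x}]$ statement to the case $\alpha=1$.
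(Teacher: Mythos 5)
Your proposal follows the same skeleton as the paper's proof: reduce to a coordinate-wise Nadaraya--Watson problem in the Euclidean tangent space, invoke the \citet{devroye1980,spiegelman1980} consistency result there, and then transport the convergence back to $\mathcal{L}_m$ through $P_{\cal L}\circ \text{G}_\alpha\circ\pi_\nu$. The difference is in how much of the second half you actually carry out. The paper disposes of the back-transport in one clause (``using the continuous mapping theorem and Pythagorean arguments'') and bounds $J_{n,\alpha}$ by $\sum_{ij}J_{n,\alpha,ij}$; you correctly identify that this is the delicate step, since $\pi_\nu$ and $P_{\cal L}$ are respectively linear and non-expansive but $\text{G}_\alpha$ is not globally Lipschitz for $\alpha\neq 1$, so mean-square convergence does not pass through it by continuity alone --- your truncation plus uniform-integrability plan is the right repair, and you also note the moment condition $E\{\|\text{F}_\alpha(\mathbf{L})\|^2\}<\infty$ that the coordinate-wise step actually requires. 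Your final remark is also well taken: the tangent-space estimator converges to $E[\pi_\nu^{-1}(\text{F}_\alpha(\mathbf{L}))\mid\mathbf{x}]$, whereas the paper's display writes the target as $\pi_\nu^{-1}(\text{F}_\alpha(\boldsymbol\Lambda(\mathbf{x})))$, silently commuting the power map with the conditional expectation; for $\alpha\neq 1$ these differ, so the estimator is consistent for the back-transformed power-Euclidean regression function rather than for $E[\mathbf{L}\mid\mathbf{x}]$ itself, exactly as you say. In short, your route is the paper's route done carefully, and the two points you flag are genuine gaps in the paper's own argument rather than errors in yours.
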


\begin{proof} {\rm First embed the graph Laplacians in the Euclidean manifold
  ${\cal M}_m$ and map to a tangent space $T_\nu({\mathcal M}_m)$.  Consider
  the univariate regression problem for the $(i,j)$th element of
  $\pi_\nu^{-1}(\mathbf{F}_\alpha( \mathbf{L}(\mathbf{x}) ))$. Again from
  \citet{devroye1980,spiegelman1980} we know that under the conditions of the
  proposition we have uniform weak consistency in the tangent space.  $$
  J_{n,\alpha,ij} = E \left\{ \int \left| \left( \hat{ \mathbf{L}} _{ {\mathcal M}_m,\alpha}
  ( \mathbf{x}) \right)_{ij} -
  \pi_\nu^{-1}\left(\mathbf{F}_\alpha(\boldsymbol\Lambda(\mathbf{x}))\right)_{ij} \right|^2
  \mu(\mathrm{d}\mathbf{x}) \right\}  \to 0 \; , \;  i=1,\ldots,m; j=1,\ldots,m,$$

as $h \to 0$ and $n h^d \to \infty$. Also, using the continuous mapping theorem
and Pythagorean arguments as in \citet{Severnetal19}, we have $$ J_{n,\alpha}
\le  \sum_{i=1}^m\sum_{j=1}^m J_{n,\alpha,ij} \to 0 , $$ as $h \to 0$ and $n
h^d \to \infty$.  } \end{proof}

\subsection{Bandwidth selection}
The result that the power Euclidean Nadaraya-Watson estimator is uniformly
weakly consistent gives reassurance that the method is a sensible practical
approach to non-parametric regression for predicting networks. The result is asymptotic, however, which leaves open the question of how to choose the bandwidth, $h$, in practice.
One way to do so is to select it via cross validation \citep{Efron93} as follows. Denote by $\hat{\mathbf{L}}_{-i}(\mathbf{x}; h)$ a Nadaraya-Watson estimator at $\mathbf{x}$, based on distance metric $d$, with bandwidth $h$, trained on all the training observations excluding the $i$th. Selection of bandwidth by cross validation then involves choosing $h$ to minimise the criterion
\begin{equation}
\sum_{i=1}^n d\left(\mathbf{L}_i, \hat{\mathbf{L}}_{-i}(\mathbf{x}_i; h) \right)^2.
\label{eqn:CV:criterion}
\end{equation}

\section{Application: Enron email corpus}\label{sec:enron}
The Enron dataset was made public during the legal investigation of Enron by the Federal Energy Regulatory Commission \citep{klimt2004introducing} 
and an overview can be found in \citet{Diesner2005}.
Similar to \citet{shetty2004enron} we use this data to form social networks between $m=151$ employees and 
the data are available from \citet{Enrondata}. For each month we create a  
network with employees as nodes. The edges between nodes have weights that are the number of emails exchanged between the two employees in the given month.  
The networks we consider are for the whole months from June 1999 (month 1) to May 2002 (month 36), and 
we standardise by dividing by the trace of the graph Laplacian for each month.
The aim is to model smooth trends in the structure of the dynamic networks as they evolve over time, and we also wish to highlight anomalous months where the network is rather unusual compared to the fitted trend.


\begin{figure}[htbp]
    \centering
     
      \begin{subfigure}[b]{0.47\textwidth}    
 \caption{}
        \includegraphics[width=8.0cm]{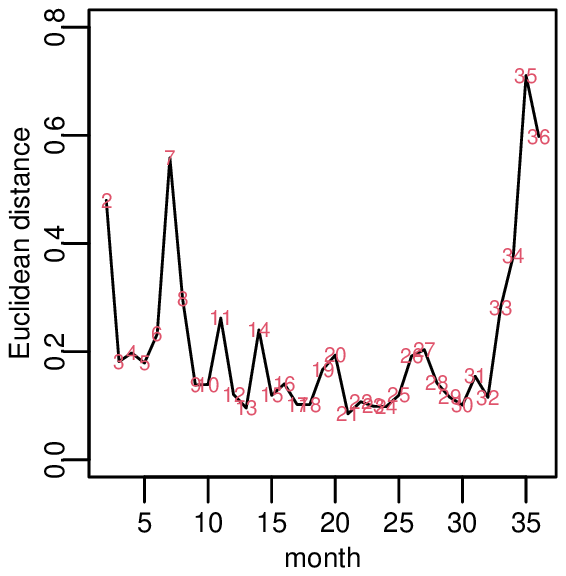}
    \end{subfigure}
        \begin{subfigure}[b]{0.47\textwidth}
 \caption{}
        \includegraphics[width=8.0cm]{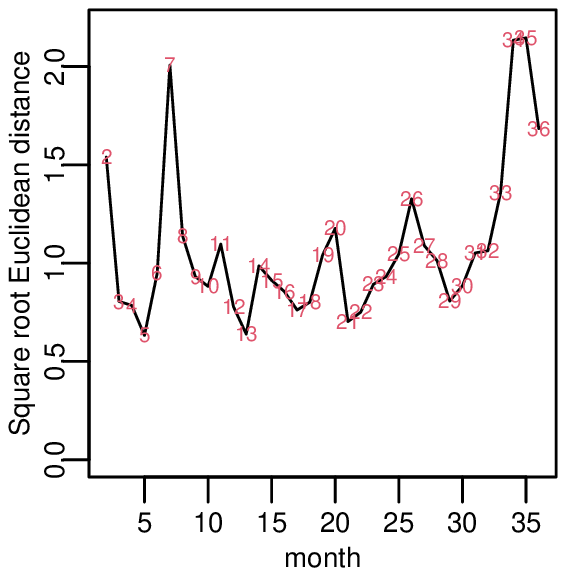}
    \end{subfigure}
            \caption{  {\textit{  Distances, $d(\mathbf{L}_{i-1}, \mathbf{L}_{i}), i=2,\ldots,36$, between consecutive observations for the monthly Enron networks for (a) the Euclidean metric, $d_1$, and (b) the square root Euclidean metric, $d_{\frac{1}{2}}$.          
          }}}\label{fig:dist}      
            \end{figure}

In Figure \ref{fig:dist} we plot the distances between consecutive monthly graph Laplacians 
using Euclidean distance (a) and square root Euclidean distance (b). 
Some of the largest successive distances are at times $1-2, 6-7, 33-34, 34-35, 35-36$, and these are possible candidate positions for anomalous networks that are 
rather different.

\begin{figure}[htbp]
    \centering
    \begin{subfigure}[b]{0.47\textwidth}
 \caption{}
        \includegraphics[width=8.0cm]{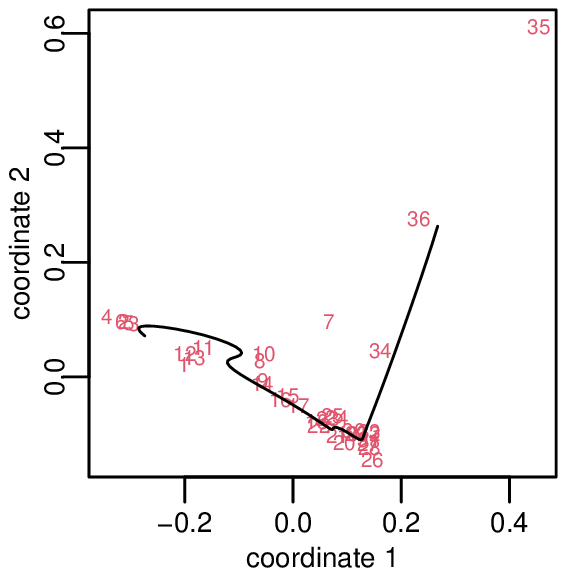}
    \end{subfigure}
        \begin{subfigure}[b]{0.47\textwidth}
         \caption{}
        \includegraphics[width=8.0cm]{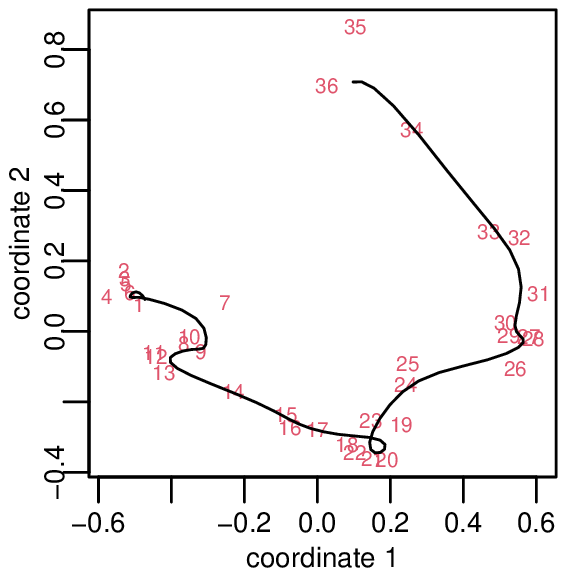}
    \end{subfigure}
        \begin{subfigure}[b]{0.47\textwidth}
         \caption{}
     \includegraphics[width=8.0cm]{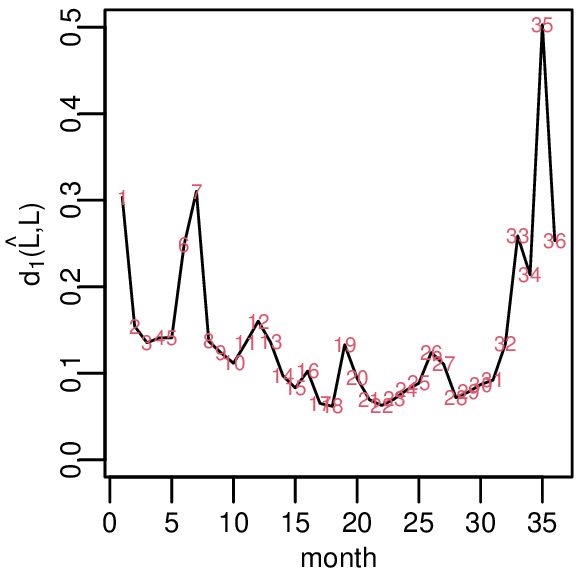}
    \end{subfigure}
            \begin{subfigure}[b]{0.47\textwidth}
         \caption{}
     \includegraphics[width=8.0cm]{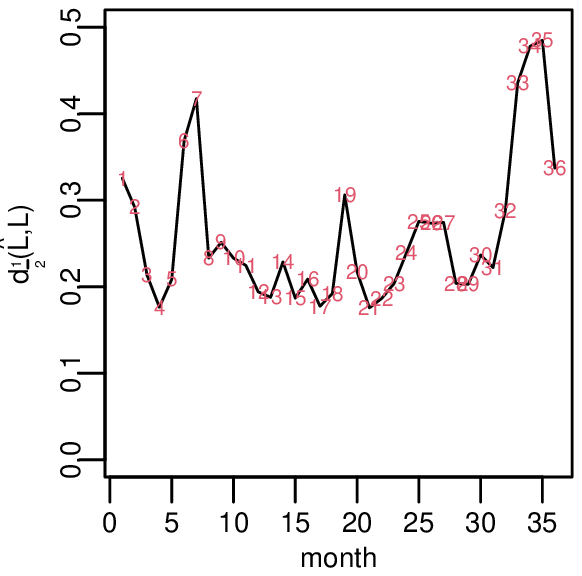}
    \end{subfigure}
    
          \caption{  {\textit{ PCA plots showing the data and Nadaraya-Watson curves, and residual plots for the Enron network data. In each plot the red digits indicate the observation number (month index). In the upper plots the black lines show the Nadaraya-Watson regression curves in the space of the first two principal components, using (a) the Euclidean metric, $d_1$, and $h=2$; (b) the Square root Euclidean metric, $d_{\frac{1}{2}}$, and $h=1$.  We performed the calculations for a various values $h=0.5,1,2,4,8$, and the chosen value of $h$ was whichever that was optimal with respect to \eqref{eqn:CV:criterion}. Plots (c) and (d) are corresponding residual plots showing distance $d(\hat{\mathbf{L}}_i,{\mathbf{L}}_i) $ between the fitted values $\hat{\mathbf{L}}_i$ and the observations ${\mathbf{L}}_i$.
          }}}\label{fig:pca enron}      
            \end{figure}

\begin{figure}[htbp]
    \centering
     \begin{subfigure}[b]{0.47\textwidth}
 \caption{}
        \includegraphics[width=8.0cm]{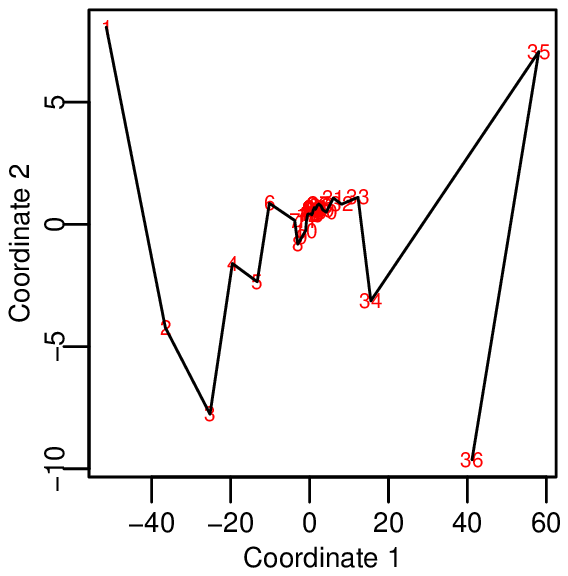}
    \end{subfigure}
        \begin{subfigure}[b]{0.47\textwidth}
 \caption{}
        \includegraphics[width=8.0cm]{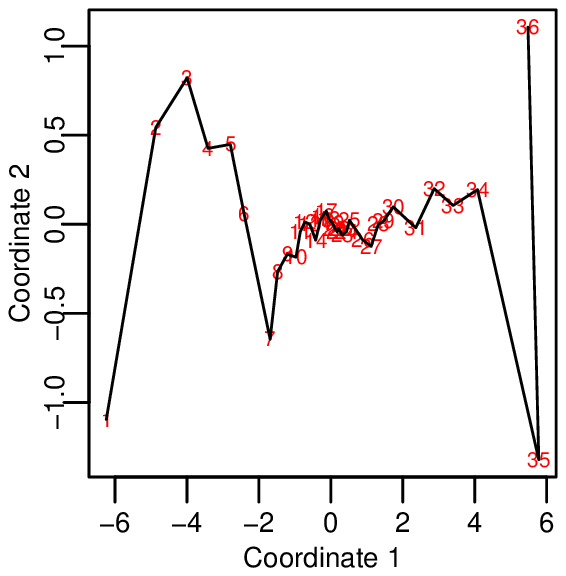}
    \end{subfigure}
            \begin{subfigure}[b]{0.47\textwidth}
                 \caption{}
        \includegraphics[width=8.0cm]{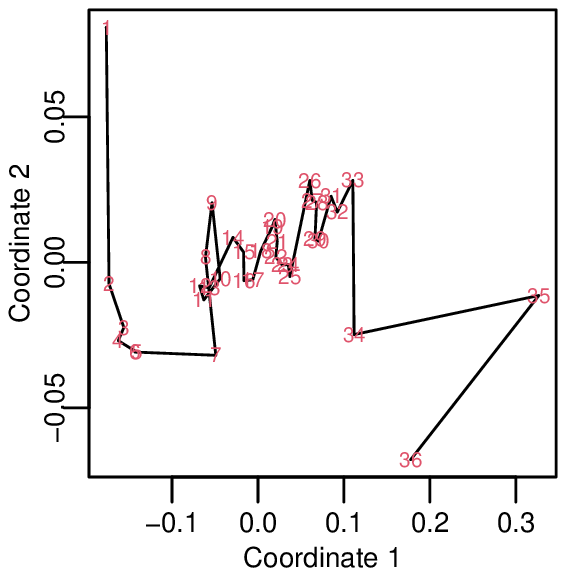}
    \end{subfigure}
        \begin{subfigure}[b]{0.47\textwidth}
 \caption{}
        \includegraphics[width=8.0cm]{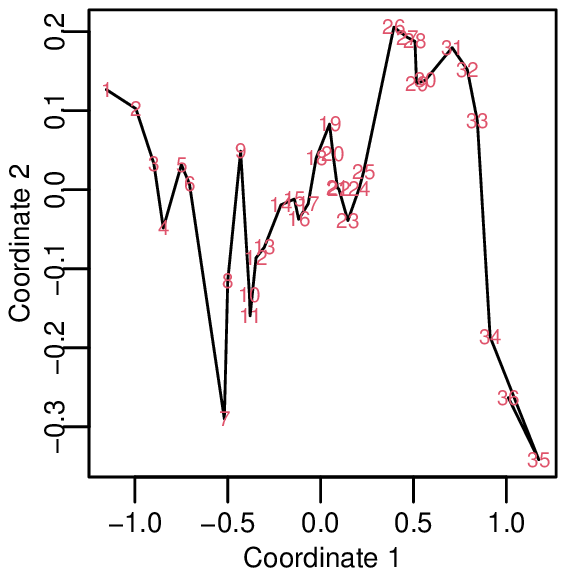}
    \end{subfigure}
          \caption{  {\textit{ 
          The red digits indicate the month of the data.   MDS plots using the Mahalanobis metric for (a) the Enron data with $\alpha=1$ and (b) with $\alpha=\frac{1}{2}$ 
          using an overall estimate from $\rho$, 
          and the Mahalanobis metric when $\rho$ is estimated to maximize the variance explained by PC1 for (c) the Enron data with $\alpha=1$ and (d)  with $\alpha=\frac{1}{2}$.     
          }}}\label{fig:edit mds enron}      
            \end{figure}

We provide a PCA plot of the first two PC scores in Figure \ref{fig:pca enron}(a),(b) and include the Nadaraya-Watson estimator projected into 
the space of the first two PCs. Here the bandwidth has been chosen by cross-validation as $h=2$ for the Euclidean case and $h=1$ for the square root metric. The  Nadaraya-Watson estimator provides a smooth path through the data, and the structure is clearer in the square root metric plot.

We are interested in finding anomalies in the Enron dynamic networks and so 
we compute the distances from each network to the fitted value from the 
Nadaraya-Watson estimate.
Figure \ref{fig:pca enron} shows these residual distances of 
each graph Laplacian to the fitted Nadaraya-Watson values for (c) the
Euclidean metric and (d) the square root metric. 
Some of the largest residuals are months 1,7,35 for Euclidean and 7,33,34,35 for the square root metric, and these are candidates for anomalies.

From Figure \ref{fig:pca enron}(b) it looks like there is an approximate horseshoe shape in the PC score plot which could be 
an example of the horseshoe effect  \citep{kendall1971abundance, diaconis2008horseshoes, morton2017uncovering}. 
 We might conclude there is a change point in the data around months 20-26 from these plots but this may be misleading \citep{doi:10.1098/rsta.1970.0091}. 
Explained in \citet[page 412]{Mardiaetal79}, the horseshoe effect occurs when the distances which are ``large'', between data points, appear the same as those that are ``moderate''. \citet{morton2017uncovering} described this as a ``saturation property'' of the metric, and so on the PCA plot the point corresponding to a `large' time is pulled in closer to time 1 than we intuitively would expect.

As an alternative to PCA, which seeks to address this horseshoe effect, we consider multidimensional scaling (MDS) with a Mahalanobis metric in the tangent space \citep[p.31]{Mardiaetal79} between two graph Laplacians $\mathbf{L}_k$ and $\mathbf{L}_l$, at times $k$ and $l$ respectively, which is:
\begin{align*}
\sqrt{(\pi_0^{-1}(\text{F}_\alpha(\mathbf{L}_k))-\pi_0^{-1}(\text{F}_\alpha(\mathbf{L}_l))-\boldsymbol{\mu})^T\boldsymbol{\Sigma}_{kl}^{-1}(\pi_0^{-1}(\text{F}_\alpha(\mathbf{L}_k))-\pi_0^{-1}(\text{F}_\alpha(\mathbf{L}_l))-\boldsymbol{\mu})},
\end{align*}
where $\boldsymbol{\mu}$ and $\boldsymbol{\Sigma}_{kl}$ are the mean and covariance matrix of $\pi_0^{-1}(\text{F}_\alpha(\mathbf{L}_k))-\pi_0^{-1}(\text{F}_\alpha(\mathbf{L}_l))$ respectively. Here we take $\boldsymbol{\mu}$ as zero and  consider an isotropic AR(1) model which has covariance matrix 
$$
\boldsymbol{\Sigma}_{kl}=\frac{\sigma^2\rho^{\vert k-l \vert}}{1-\rho}\mathbf{I}_\frac{m(m-1)}{2},
$$
which is a diagonal matrix where the diagonal elements are the variance of elements and we have assumed a 0 covariance between any other elements.
Writing $\mathbf{y}_k=\pi_0^{-1}(\text{F}_\alpha(\mathbf{L}_k))$ and $\mathbf{v}_k=\pi_0^{-1}(\text{F}_\alpha(\mathbf{L}_{k-1}))$ we estimate 
$\rho$ by least squares
$
\rho
=\frac{\sum_{k=2}^{n}(\mathbf{y}_k^T\mathbf{v}_k)}{\sum_{k=2}^{n}(\mathbf{v}_k^T\mathbf{v}_k)},
$
and we take $\sigma = 1$ as this is just an overall scale parameter. 
The Mahalanobis metric between graph Laplacians, $\mathbf{L}_k$ and $\mathbf{L}_l$, can now be written as
\begin{align*}
&=\sqrt{\frac{1-\rho}{\rho^{\vert k-l \vert}}(\pi_0^{-1}((\text{F}_\alpha(\mathbf{L}_k))-\pi_0^{-1}(\text{F}_\alpha(\mathbf{L}_l)))^T(\pi_0^{-1}(\text{F}_\alpha(\mathbf{L}_k))-\pi_0^{-1}(\text{F}_\alpha(\mathbf{L}_l)))}\\
&=\sqrt{\frac{1-\rho}{\rho^{\vert k-l \vert}}} \Vert \pi_0^{-1}(\text{F}_\alpha(\mathbf{L}_k))-\pi_0^{-1}(\text{F}_\alpha(\mathbf{L}_l)) \Vert 
=\sqrt{\frac{1-\rho}{\rho^{\vert k-l \vert}}} \Vert \text{F}_\alpha(\mathbf{L}_k)- \text{F}_\alpha(\mathbf{L}_l) \Vert 
 = \sqrt{\frac{1-\rho}{\rho^{\vert k-l \vert}}} d_{\alpha}( \mathbf{L}_k , \mathbf{L}_l )  .
\end{align*}

The plot of MDS with the Mahalanobis distance are given in  Figure \ref{fig:edit mds enron}(a)-(b). In both plots there are large distances between the 
first few and last few observations compared to the central observations, which is broadly in keeping with Figure \ref{fig:dist}(a),(b) although 
the middle observations do seem too close together in the MDS plots. We consider
an alternative estimate in choosing $\rho$ that maximises the variance explained by the first PC scores for each example, shown in Figure \ref{fig:edit mds enron}(c)-(d).
These final MDS plots are more in agreement with the distance plots of  Figure \ref{fig:dist}. In particular in Figure \ref{fig:edit mds enron}(d) we see that months 7, 34, 35, 36 look rather different from the rest.

Finally we consider the main features of all the results from Figure \ref{fig:dist}-Figure \ref{fig:edit mds enron} and we see that 
the 7th, 34th and 35th months stand out as strong anomalies. The 7th month corresponds to December 1999, and this is picked out to be an anomaly in \citet{wang2014locality}, believed to coincide with Enron's tentative sham energy deal with Merrill Lynch created to meet profit expectations and boost the stock price. Month 34 and 35 correspond to March and April 2002 these correspond to the former Enron auditor, Arthur Andersen, being indicted for obstruction of justice \citep{guardianEnron}.

      \section{Application: 19th century novel networks}  \label{ex: Nw novels}
      We consider an application where it is of
      interest to analyze dynamic networks from 
      the novels of Jane Austen and Charles Dickens. 
      The 7 novels of Austen and 16 novels of Dickens were represented as samples of network data by \citet{Severnetal19}. Each novel is represented by a network where 
each node of the network is a word, and edges are formed with weights proportional to the number of times a pair of words co-occurs closely in the text. 
For each novel we produce a network counting pairwise word co-occurrences, and words are said to co-occur if they appear within five words of
each other in the text. A choice that needs to be made is if we allow co-occurrences over sentence boundaries and chapter boundaries \citep[Section 3]{evert2008corpora}, and for this dataset we allow it. 
The data are obtained from
CLiC \citep{doi:10.3366/cor.2016.0102}.

 \begin{figure}[htbp]
    \centering
     \begin{subfigure}[b]{0.49\textwidth}
 \caption{ }
        \includegraphics[trim={0 1cm 0 0.5},clip]{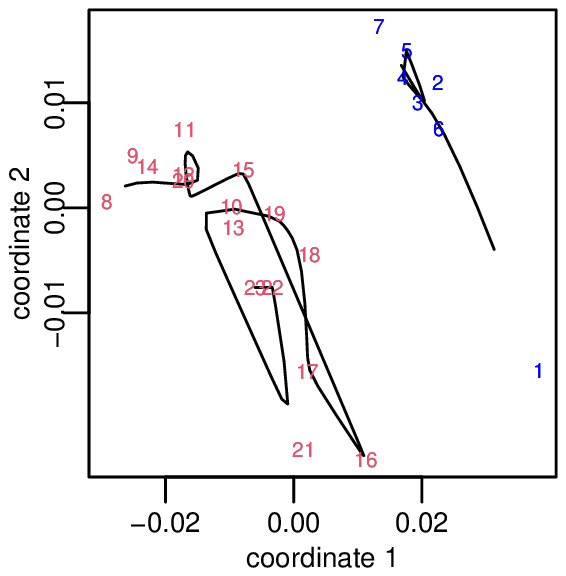}
    \end{subfigure}
          \begin{subfigure}[b]{0.45\textwidth}
 \caption{ }
        \includegraphics[trim={0 1cm 0 0.5},clip]{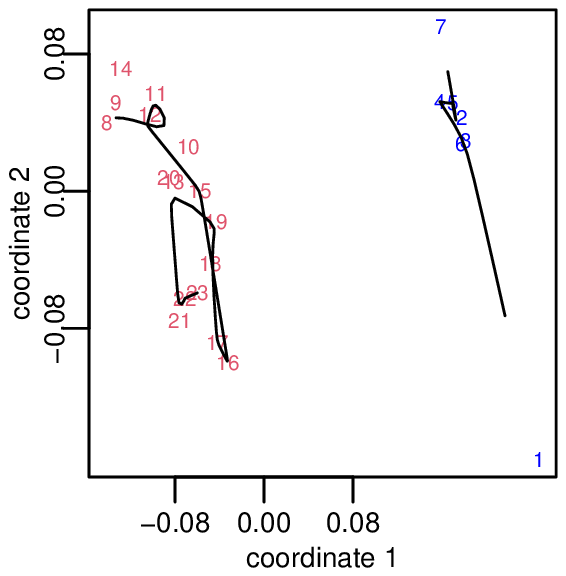}
    \end{subfigure}
  \begin{subfigure}[b]{0.49\textwidth}
 \caption{ }
        \includegraphics[trim={0 1cm 0 0.5},clip]{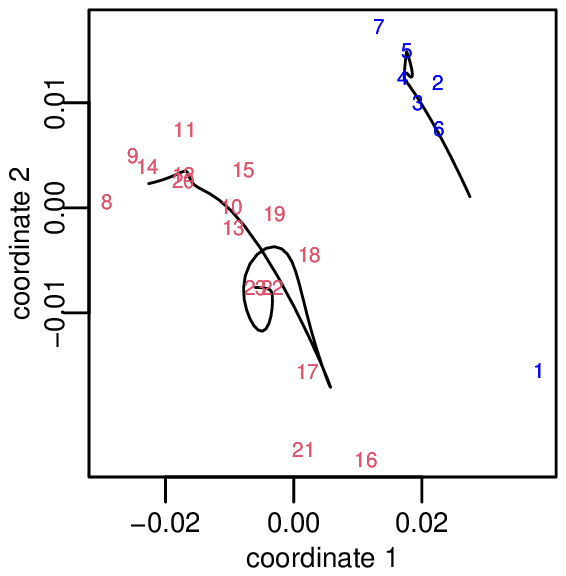}
    \end{subfigure}
          \begin{subfigure}[b]{0.45\textwidth}
 \caption{ }
        \includegraphics[trim={0 1cm 0 0.5},clip]{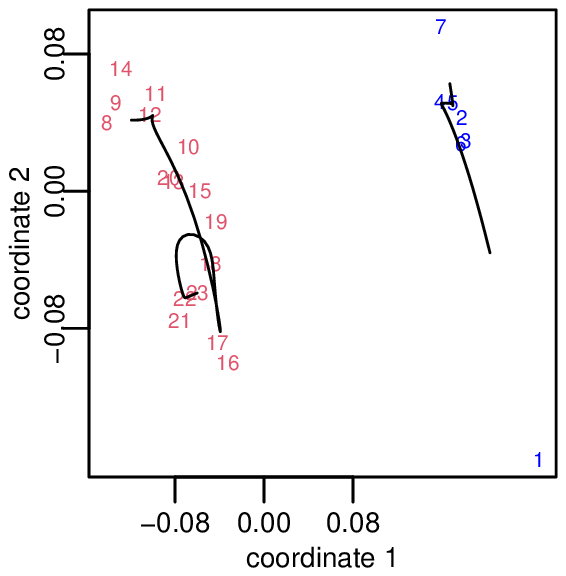}
    \end{subfigure}
      \begin{subfigure}[b]{0.49\textwidth}
 \caption{ }
        \includegraphics[trim={0 1cm 0 0.5},clip]{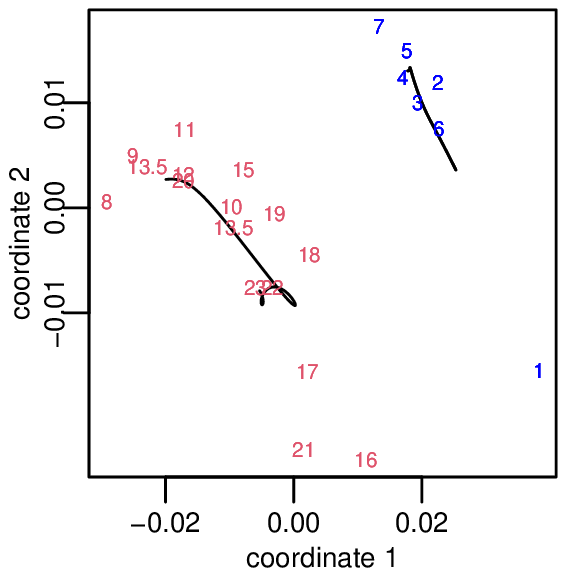}
    \end{subfigure}
          \begin{subfigure}[b]{0.45\textwidth}
 \caption{ }
        \includegraphics[trim={0 1cm 0 0.5},clip]{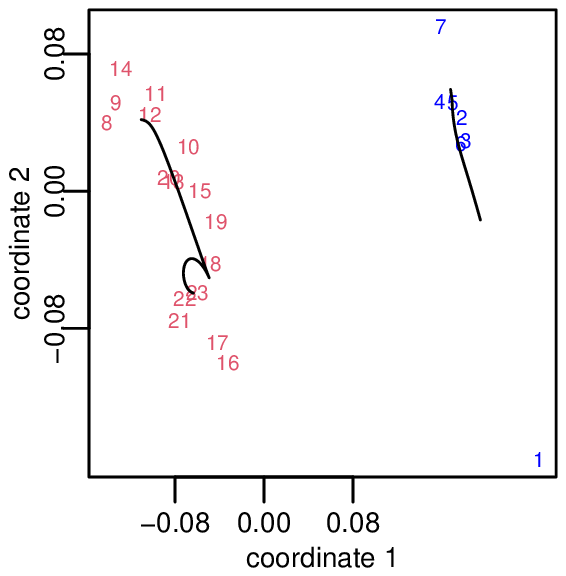}
    \end{subfigure}
     \vspace{0.5cm}
     \caption{
     Regression paths for Austen novels (blue) 
     between the years 1794 to 1815, numbered 1-7 according to chronology of the novels,
     and for Dickens novels (red) between the years 1836 to 1870, numbered 8-23; 
     using (left to right) $d=d_1$ and $d=d_\frac{1}{2}$, with bandwidth (top to bottom) $h=1,\, 2,\, 4$, of which $h=4$ gave the smallest value of the cross-validation criterion \eqref{eqn:CV:criterion}. }\label{fig:NW regression novels}
\end{figure}

We take the node set $V$ as the $m=1000$ most common words across all the novels of Austen and Dickens. 
A pre-processing step for the novels is to normalise
each graph Laplacian, in order to remove the gross effects of different lengths of the novels, by dividing each graph
Laplacian by its own trace, resulting in a trace of 1 for each novel.
  Our key
statistical goal is to investigate the authors' evolving writing styles, by carrying out non-parametric 
regression with a graph Laplacian response on the year $t$ that each novel was written.

We apply the Nadaraya-Watson regression to the Jane Austen and Charles Dickens networks separately to predict their writing styles at different times. 
The response is a graph Laplacian and the covariate is time $t$ for each novel, with a separate regression 
for each novelist. 
We compared using the metrics $d_1$ and $d_\frac{1}{2}$. For each author a Nadaraya-Watson estimate was produced for every 6 months within the period the author was writing. We compared different bandwidths, $h$, in the Gaussian kernel.  The results are shown in Figure \ref{fig:NW regression novels} plotted on the first and second principal component space for all the novels.
 
 For both metrics for Dickens when $h=1$ the regression lines are not at all smooth. 
For both metrics with $h=2$ the regression line for Dickens appears to show an 
 initial smooth trend, then a
 turning point around the years 1850 and 1851 (between David Copperfield and Bleak House which are novels 16 and 17 in Figure \ref{fig:NW regression novels}).
 After 1851 there is much less dependence 
 on time. This change in structure is especially evident in the $h=4$ plot for both metrics, which has the smallest value of the cross-validation criterion \eqref{eqn:CV:criterion} out of these choices $h \in \{ 1,2,4 \}$. 
  In the year 1851 Dickens had a tragic year including his wife having a nervous breakdown, his father dying and his youngest child dying \citep{charlesDickens}. We see that the possible turning point is around the same time as these significant events. 
 
 As there are far fewer novels written by Austen it is less obvious if there is any turning point in her writing, however it is clear that Lady Susan (novel 1) is an anomaly, not fitting with the regression curve that does follow Austen's other works more closely.
 Lady Susan is Austen's earliest work, and is a short novella published 54 years after Austen's death.

\section{Discussion}
The two applications presented involve a scalar covariate, but the Nadaraya-Watson estimator is appropriate to more general covariates, e.g. spatial covariates. A further extension would be to adapt the method of kriging, also referred to as Gaussian process prediction. 
Kriging is a geospatial method for prediction at points on a random field  \citep[e.g. see][]{Cressie93}, and \citet{Pigolietal16} considered kriging for manifold-valued data. 
The kriging  predictor of an unknown 
graph Laplacian $\mathbf{L}(\mathbf{x})$ on a random field with known coordinates $\mathbf{x}$ for the dataset $(\lbrace \mathbf{L}_1, \mathbf{x}_1\rbrace,\dots, \lbrace \mathbf{L}_n,\mathbf{x}_n\rbrace)$ is of the form $Z(\mathbf{x})=\sum_{i=1}^n b(\mathbf{x}_i)\mathbf{L}_i$,
where the weights, $b(\mathbf{x}_i)$, are determined by minimizing the mean square prediction error for a given covariance function.   

The Nadaraya-Watson estimator can also be applied in a reverse setting where some variable $\mathbf{t}_i$ is dependent on the graph Laplacian $\mathbf{L}_i$, this can be written as $\mathbf{t}_i=\mathbf{t}(\mathbf{L}_i)$. This could be used if, for example, one had the times networks were produced and then wanted to predict the time a new network was produced. In this case the Nadaraya-Watson estimator is a linear combination of known $\mathbf{t}_i$ values, weighted by the graph Laplacian distances, given by
\begin{align}\label{eq: NW II}
\hat{\mathbf{t}}(\mathbf{L})=\frac{\sum_{i=1}^nK_h(d(\mathbf{L}, \mathbf{L}_{i}))\mathbf{t}_i}{\sum_{i=1}^nK_h(d(\mathbf{L}, \mathbf{L}_{i}))},
\end{align}
where $d$ can be any metric between two graph Laplacians. \citet{Severn19} provided an application of this approach using the Gaussian kernel defined in (\ref{eq:kernel function}), predicting the time that a novel was written using the network graph Laplacian as a covariate. 

Other metrics could also be used, for example the Procrustes metric of \citet{Severnetal19}. To solve (\ref{eq:manifold nadaraya}) for the Procrustes metric, the algorithm for obtaining a weighted generalised Procrustes mean given in \citet[Chapter 7]{Drydmard16} can be implemented. 

In Euclidean space there are more general results for the Nadaraya-Watson estimator including weak convergence in $L_p$ norm, rather than 
$p=2$ results that we have used  \citep{devroye1980,spiegelman1980}. More general results also exist, e.g. see \citet{Walk2002}, including
strong consistency. It will be interesting to explore which of these results can be extended to graph Laplacians, although the 
additive properties of the $p=2$ case have been particularly important in our work.

\section*{Acknowledgments}

This work was supported by the Engineering and Physical Sciences Research Council [grant number  EP/T003928/1]. The datasets were derived from the following resources available in the public domain: The Network Repository http://networkrepository.com and CLiC https://clic.bham.ac.uk

\bibliographystyle{apalike}
\bibliography{nprpaper-bib}

\end{document}